\definecolor{Red}{rgb}{1,0,0}
\definecolor{Blue}{rgb}{0,0,1}
\definecolor{Olive}{rgb}{0.41,0.55,0.13}
\definecolor{Green}{rgb}{0,1,0}
\definecolor{MGreen}{rgb}{0,0.8,0}
\definecolor{DGreen}{rgb}{0,0.55,0}
\definecolor{Yellow}{rgb}{1,1,0}
\definecolor{Cyan}{rgb}{0,1,1}
\definecolor{Magenta}{rgb}{1,0,1}
\definecolor{Orange}{rgb}{1,.5,0}
\definecolor{Violet}{rgb}{.5,0,.5}
\definecolor{Purple}{rgb}{.75,0,.25}
\definecolor{Brown}{rgb}{.75,.5,.25}
\definecolor{Grey}{rgb}{.5,.5,.5}
\theoremstyle{plain}
\newtheorem{theorem}{Theorem} 
\newtheorem{proposition}{Proposition}
\newtheorem{lemma}{Lemma}
\theoremstyle{remark}
\theoremstyle{definition}
\newcommand{\Rc}{\mathcal{R}}
\def\eps{\epsilon}
\def\la{\lambda}
\def\textiid{i.i.d.\@\xspace}
\newcommand\iid{\ifmmode\text{ i.i.d. } \else \textiid \fi}
\newcommand{\qmf}{\mathfrak{q}}
\newcommand{\beqs}{\begin{equation*}}
\newcommand{\eeqs}{\end{equation*}}
\newcommand{\beq}{\begin{equation}}
\newcommand{\eeq}{\end{equation}}
\begin{document}


\title{Sub-optimality of the Han--Kobayashi Achievable Region for Interference Channels}

\author{Chandra Nair \and Lingxiao Xia \and Mehdi Yazdanpanah
}
\institution{The Chinese University of Hong Kong}

\maketitle

\begin{abstract}
	Han--Kobayashi achievable region is the best known inner bound for a general discrete memoryless interference channel. We show that the capacity region can be strictly larger than the Han-Kobayashi region for some channel realizations, and hence the strict sub-optimality of Han--Kobayashi achievable region.
\end{abstract}

\allowdisplaybreaks

\section{Introduction}
	Interference channel models the communication of two (or more) transmitter/receiver pairs over a shared medium. A computable characterization of the capacity region ($\mathcal C$) for interference channels is a classical and a fundamental open problem in multi-terminal information theory. With the vast interest in wireless communications and the prominent presence of interference under such settings, characterizing (or understanding) the capacity region of the interference channel is very central.

	The interference channel shown below models the communication of two transmitter/receiver pairs and is the primary model we use for this study.
	
	\begin{figure}[h]\centering
		
		\begin{tikzpicture}[scale=1, every node/.style={scale=1}]
		\node at (0,1) {$M_1$};
		\node at (0,-1) {$M_2$};
		\draw [->,thick] (.4,1) -- (1.2,1);
		\draw [->,thick] (.4,-1) -- (1.2,-1);
		\draw (1.2,.5) rectangle +(2,1); \node at (2.2,1) {Encoder 1};
		\draw (1.2,.-1.5) rectangle +(2,1); \node at (2.2,-1) {Encoder 2};
		\draw [->,thick] (3.2,1)--(4,1); \node at (3.6,1.3) {$X_1^n$};
		\draw [->,thick] (3.2,-1)--(4,-1); \node at (3.6,-0.7) {$X_2^n$};
		\draw (4,-1.5) rectangle +(3.0,3); \node at (5.5,0) {$\qmf(y_1,y_2|x_1,x_2)$};
		\draw [->,thick] (7.0,1) --(7.8,1); \node at (7.4,1.3) {$Y_1^n$};
		\draw [->,thick] (7.0,-1) --(7.8,-1); \node at (7.4,-.7) {$Y_2^n$};
		\draw (7.8,.5) rectangle +(2,1); \node at (8.8,1) {Decoder 1};
		\draw (7.8,-1.5) rectangle +(2,1); \node at (8.8,-1) {Decoder 2};
		\draw [->,thick] (9.8,1) --(10.6,1); \node at (11.0,1) {$\hat M_1$};
		\draw [->,thick] (9.8,-1) --(10.6,-1); \node at (11.0,-1) {$\hat M_2$};
		\end{tikzpicture}
		\caption{Discrete memoryless interference channel}
		\label{fig:dmic}
	\end{figure}
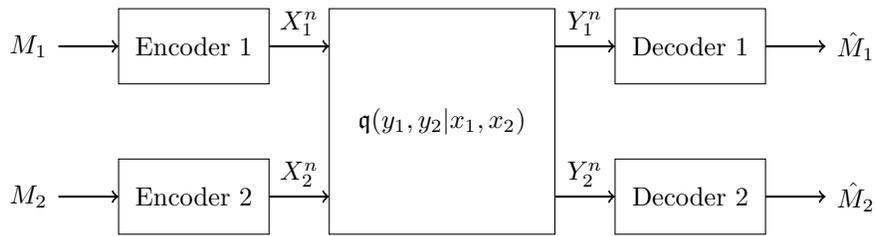
	We follow the standard definitions of achievable rates and capacity region, as well as the standard notations that may be found, for instance, in \cite{elk12}. The best known achievable rate region under such definitions is due to Han and Kobayashi \cite{hak81} and an equivalent form \cite{elk12} is presented below.

	\begin{theorem}[Han--Kobayashi (HK) inner bound]
		\label{th:hk}
		A rate-pair $(R_1,R_2)$ is achievable for the channel described in Figure \ref{fig:dmic} if
		\begin{align}
			R_1 &< I(X_1;Y_1|U_2,Q),\label{th:hkR1}\\
			R_2 &< I(X_2;Y_2|U_1,Q),\label{th:hkR2}\\
			R_1+R_2 &< I(X_1,U_2;Y_1|Q)+I(X_2;Y_2|U_1,U_2,Q),\label{th:hkR1R2}\\
			R_1+R_2 &< I(X_2,U_1;Y_2|Q)+I(X_1;Y_1|U_1,U_2,Q),\\
			R_1+R_2 &< I(X_1,U_2;Y_1|U_1,Q)+I(X_2,U_1;Y_2|U_2,Q),\\
			2R_1+R_2 &< I(X_1,U_2;Y_1|Q)+I(X_1;Y_1|U_1,U_2,Q)+I(X_2,U_1;Y_2|U_2,Q),\\
			R_1+2R_2 &< I(X_2,U_1;Y_2|Q)+I(X_2;Y_2`|U_1,U_2,Q)+I(X_1,U_2;Y_1|U_1,Q)
		\end{align}
		for some pmf $p(q)p(u_1,x_1|q)p(u_2,x_2|q)$, where $|U_1|\leq|X_1|+4$, $|U_2|\leq |X_2|+4$, and $|Q|\leq 7$.
		The set of achievable rate pairs form the Han--Kobayashi achievable region, or HK region, and is denoted by $\mathcal{R}_{hk}$. 
	\end{theorem}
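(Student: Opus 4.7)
The plan is a standard random coding argument with rate splitting, superposition, simultaneous joint-typicality decoding, and a Fourier--Motzkin projection at the end. Fix a pmf $p(q)p(u_1,x_1\mid q)p(u_2,x_2\mid q)$ and split each message as $M_i=(L_i,K_i)$ with $L_i$ carried by the ``public'' auxiliary $U_i$ at rate $T_i$ and $K_i$ carried by $X_i$ at rate $S_i$, so $R_i=T_i+S_i$. Generate a time-sharing sequence $Q^n\sim\prod p(q_k)$; conditionally on $Q^n$, generate $2^{nT_i}$ codewords $U_i^n(l_i)$ i.i.d.\ $\sim\prod p(u_{i,k}\mid q_k)$; and for each $l_i$, generate $2^{nS_i}$ superposed codewords $X_i^n(l_i,k_i)$ i.i.d.\ $\sim\prod p(x_{i,k}\mid u_{i,k}(l_i),q_k)$. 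To send $m_i=(l_i,k_i)$ transmit $X_i^n(l_i,k_i)$. Decoder $j$ declares $(\hat l_1,\hat k_j,\hat l_2)$ if it is the unique triple (with $\hat k_j$ its own private index) such that $(Q^n,U_1^n(\hat l_1),U_2^n(\hat l_2),X_j^n(\hat l_j,\hat k_j),Y_j^n)$ lies in the $\epsilon$-typical set.

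Next I would carry out the error analysis at receiver $1$ by conditioning on $(L_1,K_1,L_2)=(1,1,1)$ and enumerating the five nonzero patterns of wrong indices $(\hat l_1,\hat k_1,\hat l_2)$. The joint typicality lemma (packing lemma) applied to each pattern yields the constraints
\als{
S_1 &< I(X_1;Y_1\mid U_1,U_2,Q),\\
T_2 &< I(U_2;Y_1\mid X_1,U_1,Q),\\
S_1+T_2 &< I(X_1,U_2;Y_1\mid U_1,Q),\\
T_1+S_1 &< I(X_1;Y_1\mid U_2,Q),\\
T_1+S_1+T_2 &< I(X_1,U_2;Y_1\mid Q),
}
together with the symmetric constraints at receiver $2$ (swap indices $1\leftrightarrow 2$). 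I would verify each bound by identifying, for the given pattern of fresh vs.\ fixed codeword indices, which subset of $(U_1^n,X_1^n,U_2^n)$ is independent of $(Q^n,Y_1^n)$ and the remaining fixed codewords.

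With the ten inequalities above plus $T_i,S_i\ge 0$ and $R_i=T_i+S_i$, the final step is Fourier--Motzkin elimination of $T_1,S_1,T_2,S_2$. I would eliminate $S_1,S_2$ first using $S_i=R_i-T_i$, turning the per-receiver bounds into inequalities of the form $R_1\le (\cdot)+T_1$, $R_1\ge T_1\ge 0$, etc., and then eliminate $T_1,T_2$ by pairwise addition of lower/upper bounds. Collecting all surviving pairs yields precisely the seven inequalities \eqref{th:hkR1}--\eqref{th:hkR1R2} and the three that follow, completing the achievability. I expect the Fourier--Motzkin bookkeeping to be the main obstacle: twelve inequalities in four variables produce many candidate pairs, most of which are redundant, and one must argue (or verify mechanically) that the listed seven inequalities dominate all the rest. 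Finally, the cardinality bounds $|\mathcal U_i|\le|\mathcal X_i|+4$ and $|\mathcal Q|\le 7$ are obtained by a standard Fenchel--Eggleston--Carath\'eodory argument: first fix $q$ and preserve the marginal $p(x_1,x_2\mid q)$ together with the four relevant conditional mutual informations entering the above bounds (giving the $+4$ on each $U_i$), then convexify over $Q$ to preserve the seven rate-region functionals together with appropriate normalization.
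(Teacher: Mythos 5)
First, note that the paper does not prove Theorem~\ref{th:hk} at all: it is quoted as a known result with a pointer to \cite{hak81} and to the equivalent compact form in \cite{elk12}, so your proposal can only be compared with the standard achievability proof in those references. Your outline follows that standard route (rate splitting, superposition, joint typicality, Fourier--Motzkin), but it has a genuine gap in the two places where the standard proof actually has content. Because your decoder insists on a \emph{unique} triple $(\hat l_1,\hat k_1,\hat l_2)$, your error analysis necessarily includes the constraint $T_2< I(U_2;Y_1\mid X_1,U_1,Q)$ (and its mirror $T_1<I(U_1;Y_2\mid X_2,U_2,Q)$). With these ten constraints, Fourier--Motzkin does \emph{not} yield the seven inequalities of Theorem~\ref{th:hk} for a fixed pmf: it yields a strictly smaller region in general (essentially the original Han--Kobayashi description), and showing that the union over all pmfs nevertheless coincides with the compact seven-inequality region is exactly the nontrivial equivalence result of Chong--Motani--Garg--El Gamal, not a bookkeeping exercise. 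The clean fix is to use simultaneous \emph{non-unique} decoding: decoder 1 looks for the unique $(\hat l_1,\hat k_1)$ that is jointly typical with $Y_1^n$ for \emph{some} $\hat l_2$, so an incorrect $\hat l_2$ alone is not an error and the $T_2$-only constraint disappears, leaving four constraints per receiver.

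Even after that fix, your claim that the elimination ``yields precisely the seven inequalities'' with the rest ``redundant'' is not accurate pointwise. Eliminating $T_1,T_2$ from the four-per-receiver system produces, in addition to the seven stated bounds, the two single-rate bounds
$R_1\le I(X_1;Y_1\mid U_1,U_2,Q)+I(X_2,U_1;Y_2\mid U_2,Q)$ and
$R_2\le I(X_2;Y_2\mid U_1,U_2,Q)+I(X_1,U_2;Y_1\mid U_1,Q)$,
and these are \emph{not} implied by the seven for a fixed choice of $p(q)p(u_1,x_1\mid q)p(u_2,x_2\mid q)$. Their removal requires a separate argument showing that discarding them does not enlarge the union over pmfs (the standard argument shows that whenever one of them is active, the corresponding rate points are already contained in the region obtained from a modified auxiliary choice, e.g.\ setting the relevant $U_i=\emptyset$). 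So your proposal needs either the non-unique-decoding decoder plus this explicit redundancy-after-union step, or the unique-decoding analysis plus the full Chong--Motani--Garg equivalence argument; as written, neither gap is closed. The cardinality-bound sketch at the end is fine in spirit.
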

	
	The capacity region is known under a small set of interference instantiations such as strong interference and injective  deterministic interference.  The sum capacity is established for a larger class of channels such as Gaussian interference channel with mixed or very weak interference. In all the cases mentioned above the capacity region (or the sum capacity) matches the one given by $\mathcal{R}_{hk}$. Furthermore, it was not known whether $\mathcal{R}_{hk}$ is the capacity region $\mathcal{C}$ or not. In this paper, we show that there are channel instances where $\mathcal{R}_{hk} \subsetneq \mathcal{C}$; thus showing the sub-optimality of the HK region.
	
	The main ingenuity of our work lies in the choice of the channel realizations because the computation of the HK region is not particularly straightforward. We study a class of interference channels, defined as CZI channels in the next section, where the evaluation of $\mathcal{R}_{hk}$ becomes significantly simplified\footnote{An earlier attempt was made by some of the authors \cite{lnx14c} along very similar lines where the sum-capacity of very weak interference channels was studied. However, they were unable to identify examples where the (normalized) two-letter achievable sum-rate of a two-letter product channel becomes larger than the original $\mathcal{R}_{hk}$. Note that CZI channels considered here are a further subclass of very weak interference channels; on the other hand we study a weighted sum-rate rather than the sum-rate.}. We take particular channels inside this class and compute a (normalized) two-letter achievable region of the corresponding two-letter product channel. We show that there are many examples where the (normalized) two-letter achievable region considered is strictly larger than $\mathcal{R}_{hk}$, which indicates $\mathcal{R}_{hk} \subsetneq \mathcal{C}$.
	
\section{{CZI Channel}}
	
	We say that an interference channel has clean Z interference (CZI) if one of the sub channels is a clean channel. We choose the channel from $X_2$ to $Y_2$ to be clean as depicted in Figure \ref{fig:zicoc} and study its HK region.
		\begin{figure}[h]\centering
		
			\begin{tikzpicture}[scale=1, every node/.style={scale=1}]
			\node at (0,1) {$M_1$};
			\node at (0,-1) {$M_2$};
			\draw [->,thick] (.4,1) -- (1.2,1);
			\draw [->,thick] (.4,-1) -- (1.2,-1);
			\draw (1.2,.5) rectangle +(2,1); \node at (2.2,1) {Encoder 1};
			\draw (1.2,.-1.5) rectangle +(2,1); \node at (2.2,-1) {Encoder 2};
			\draw [->,thick] (3.2,1)--(4.9,1); \node at (4,1.3) {$X_1^n$};
			\draw [->,thick] (3.2,-1)--(4.9,.9); \node at (4,-0.7) {$X_2^n$};
			\draw (4.9,.5) rectangle +(2.1,1); \node at (6,1) {$\qmf(y_1|x_1,x_2)$};
			\draw [->,thick] (7.0,1) --(7.8,1); \node at (7.4,1.3) {$Y_1^n$};
			\draw [->,thick] (3.2,-1) --(7.8,-1); \node at (6.8,-.7) {$Y_2^n=X_2^n$};
			\draw (7.8,.5) rectangle +(2,1); \node at (8.8,1) {Decoder 1};
			\draw (7.8,-1.5) rectangle +(2,1); \node at (8.8,-1) {Decoder 2};
			\draw [->,thick] (9.8,1) --(10.6,1); \node at (11.0,1) {$\hat M_1$};
			\draw [->,thick] (9.8,-1) --(10.6,-1); \node at (11.0,-1) {$\hat M_2$};
			\end{tikzpicture}
			\caption{ Discrete memoryless CZI channel}
			\label{fig:zicoc}
		\end{figure}
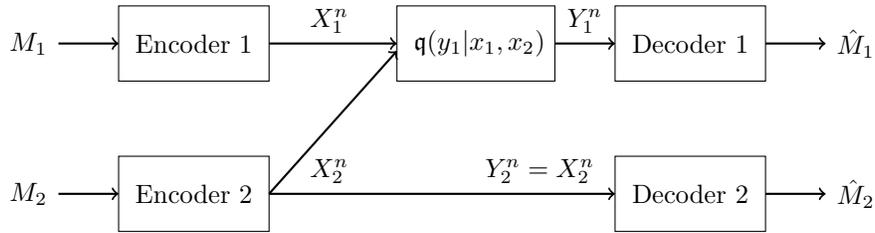
		
	The following proposition reveals an equivalent characterization of the HK region for CZI channels which simplifies its evaluation.
		
	\begin{proposition}
		\label{prop:hkz}
	    The HK region of a CZI channel is identical to the set of rate pairs $(R_1,R_2)$ that satisfy
		\begin{align}
			R_1 &< I(X_1;Y_1|U_2,Q),\label{1}\\
			R_2 &< H(X_2|Q),\label{2}\\
			R_1+R_2 &< I(X_1,U_2;Y_1|Q)+H(X_2|U_2,Q)\label{3}
		\end{align}
		for some pmf $p(q)p(u_2|q)p(x_2|u_2)p(x_1|q)$, where $|U_2|\leq |X_2|$ and $|Q|\leq 2$. 
	\end{proposition}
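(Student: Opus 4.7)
The plan is to establish mutual inclusion of the two regions, exploiting that $Y_2=X_2$ collapses every term of the form $I(X_2;Y_2|A)$ and $I(X_2,U_1;Y_2|A)$ appearing in Theorem \ref{th:hk} to the entropy $H(X_2|A)$ (since $X_2$ is determined by itself and by the pair $(X_2,U_1)$). After this substitution, the seven HK bounds become pure entropy/mutual-information expressions in $Q,U_1,U_2,X_1,X_2,Y_1$.

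For the easy inclusion (the simplified region is contained in the HK region), I feed any pmf $p(q)p(u_2|q)p(x_2|u_2)p(x_1|q)$ of the form stated in Proposition \ref{prop:hkz} into Theorem \ref{th:hk} with the auxiliary $U_1$ set to a constant symbol. Then every $U_1$-conditioning disappears, and a direct arithmetic check shows that the HK bound of type ``$R_1+R_2 < H(X_2|Q)+I(X_1;Y_1|U_1,U_2,Q)$'' becomes the sum of the simplified $R_1$ and $R_2$ bounds, both remaining $R_1+R_2$ bounds of HK become the simplified sum-rate bound, and the $2R_1+R_2$ and $R_1+2R_2$ bounds are sums of the simplified bounds. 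So the HK region at this pmf cuts down exactly to the three simplified inequalities.

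For the reverse inclusion, take a pmf $p(q)p(u_1,x_1|q)p(u_2,x_2|q)$ together with rates $(R_1,R_2)$ satisfying the HK inequalities, and define $\tilde Q=Q$, $\tilde X_1=X_1$, $\tilde U_2=(Q,U_2)$, $\tilde X_2=X_2$. The induced joint distribution factors as $p(\tilde q)p(\tilde u_2|\tilde q)p(\tilde x_2|\tilde u_2)p(\tilde x_1|\tilde q)$, and the Markov chain $\tilde Q-\tilde U_2-\tilde X_2$ holds tautologically because $\tilde U_2$ contains $\tilde Q$. The simplified bounds then evaluate to $I(X_1;Y_1|U_2,Q)$, $H(X_2|Q)$, and $I(X_1,U_2;Y_1|Q)+H(X_2|U_2,Q)$; the first equals HK's $R_1$ bound verbatim, while the other two are relaxations of HK's $R_2$ and first $R_1+R_2$ bounds via $H(X_2|Q)\ge H(X_2|U_1,Q)$ and $H(X_2|U_2,Q)\ge H(X_2|U_1,U_2,Q)$. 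Hence $(R_1,R_2)$ satisfies the simplified inequalities with the constructed pmf. It is worth noting that only the first three HK inequalities are actually invoked in this direction; the remaining four are redundant for this purpose.

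The cardinality bounds $|U_2|\le|X_2|$ and $|Q|\le 2$ are then obtained from the Fenchel--Eggleston support lemma applied first to $p(u_2|q)$, preserving the marginal $p(x_2|q)$ together with the two rate functionals (both linear in $p(u_2|q)$ once $p(x_2|u_2)$ and $p(x_1|q)$ are fixed), and then a second application to $p(q)$. The main obstacle is the reverse-direction construction: the Markov requirement $\tilde Q-\tilde U_2-\tilde X_2$ demanded by the simplified pmf is a genuine restriction on the joint distribution, and absorbing $Q$ into $\tilde U_2$ is the clean trick that satisfies it while keeping the relevant mutual informations unchanged; the resulting blow-up in alphabet size is what necessitates the support-lemma trimming at the end.
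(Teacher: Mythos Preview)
Your proof is correct and follows essentially the same route as the paper: set $U_1$ to a constant to show the simplified region sits inside $\mathcal{R}_{hk}$, and for the reverse inclusion observe that (with $Y_2=X_2$) the simplified constraints \eqref{1}--\eqref{3} are respectively equal to or relaxations of the HK constraints \eqref{th:hkR1}--\eqref{th:hkR1R2}, so any HK-feasible pair satisfies them. You are in fact more careful than the paper on one point: the simplified pmf demands the Markov chain $Q\!-\!U_2\!-\!X_2$, which the generic HK pmf $p(q)p(u_2,x_2|q)$ need not satisfy, and your device of setting $\tilde U_2=(Q,U_2)$ cleanly enforces this while leaving all three rate expressions unchanged; the paper simply asserts the constraints are looser without addressing the pmf-structure mismatch. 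Both then defer the cardinality bounds to the standard support lemma.
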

	
	\begin{proof}
	First of all, it is a simple exercise to note that the HK region of a CZI channel reduces to the three constraints above by setting $U_1 = \phi$. Hence, the above region is a subset of the HK region.
	
	Conversely, \eqref{1} is identical to \eqref{th:hkR1} of the HK region. \eqref{2} and \eqref{3} are respectively looser constraints than \eqref{th:hkR2} and \eqref{th:hkR1R2} of the HK region, which makes the above region larger than the original HK region. Thus proving equivalence.
	
	Note that the changes in cardinality of $U_2$ and $Q$ follow from standard applications of cardinality reduction techniques all while the underlying region remains the same. Therefore, we do not have to take these changes into account when talking about the two regions' equivalence. 
	\end{proof}
	
	The first result that we present below is a result that shows the {\it optimality} of the HK region along certain directions.
	
	\begin{proposition}
		\label{prop:2}
		For a CZI channel, 
		$$\underset{\mathcal{R}_{hk}}{\max}(\la R_1+R_2) = \underset{\mathcal C}{\max}(\la R_1+R_2), \forall \la \le 1.$$ 
	\end{proposition}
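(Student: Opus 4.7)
Since $\mathcal{R}_{hk} \subseteq \mathcal{C}$, the inequality $\max_{\mathcal{R}_{hk}}(\lambda R_1 + R_2) \leq \max_\mathcal{C}(\lambda R_1 + R_2)$ is immediate, so the plan is to establish the reverse. The strategy is to combine the sum-rate converse for CZI (an instance of the very-weak-interference setup whose sum-capacity was studied in \cite{lnx14c}) with the trivial converse on $R_2$, through the decomposition $\lambda R_1 + R_2 = \lambda(R_1 + R_2) + (1-\lambda) R_2$, where the hypothesis $\lambda \leq 1$ enters essentially by making the convex weight $1-\lambda$ non-negative.

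Fix any $(R_1, R_2) \in \mathcal{C}$ attained by a sequence of reliable length-$n$ codes. Fano's inequality yields $n R_1 \leq I(M_1; Y_1^n) + n\epsilon_n$ and, since $Y_2^n = X_2^n$ is a deterministic function of $M_2$, $n R_2 \leq H(X_2^n) + n\epsilon_n$. Next I invoke the sum-rate converse of \cite{lnx14c}: it furnishes a time-sharing variable $Q$ and an auxiliary $U_{2,i}$ that is a function of $M_2$ alone --- so that $U_2 \perp X_1 \mid Q$ and $U_2 \to X_2$ is Markov, matching the HK factorization $p(q) p(u_2|q) p(x_2|u_2) p(x_1|q)$ of Proposition~\ref{prop:hkz} --- satisfying
\[ I(M_1; Y_1^n) + H(X_2^n) \leq n\bigl[I(X_1, U_2; Y_1 \mid Q) + H(X_2 \mid U_2, Q)\bigr] + o(n). \]
The same time-sharing also gives $H(X_2^n) \leq n H(X_2 \mid Q)$. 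Substituting into the decomposition,
\[ \lambda R_1 + R_2 \leq \lambda\bigl[I(X_1, U_2; Y_1 \mid Q) + H(X_2 \mid U_2, Q)\bigr] + (1-\lambda) H(X_2 \mid Q) + o(1). \]

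The proof is completed by exhibiting a point of $\mathcal{R}_{hk}$ (under the same distribution $p$) that attains the right-hand side. Consider
\[ (R_1^\circ, R_2^\circ) := \bigl(I(X_1, U_2; Y_1 \mid Q) - I(U_2; X_2 \mid Q),\; H(X_2 \mid Q)\bigr). \]
By construction, \eqref{2} is saturated, and \eqref{3} is saturated because $R_1^\circ + R_2^\circ = I(X_1, U_2; Y_1 \mid Q) + H(X_2 \mid U_2, Q)$. The bound \eqref{1} reduces to $I(U_2; Y_1 \mid Q) \leq I(U_2; X_2 \mid Q)$, which holds automatically in any HK setup: using $X_1 \perp U_2 \mid Q$ one has $I(U_2; Y_1 \mid Q) \leq I(U_2; Y_1 \mid X_1, Q)$, and the Markov chain $U_2 \to X_2 \to Y_1$ conditioned on $X_1$ (from $X_1 \perp (U_2, X_2) \mid Q$ together with the memoryless channel) then gives $I(U_2; Y_1 \mid X_1, Q) \leq I(U_2; X_2 \mid X_1, Q) = I(U_2; X_2 \mid Q)$. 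Hence $(R_1^\circ, R_2^\circ) \in \mathcal{R}_{hk}$, and at this point $\lambda R_1^\circ + R_2^\circ$ exactly matches the converse bound, establishing $\max_\mathcal{C}(\lambda R_1 + R_2) \leq \max_{\mathcal{R}_{hk}}(\lambda R_1 + R_2)$.

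The main obstacle is the sum-rate converse in the HK-compatible single-letter form used in the second paragraph; this is where the techniques of \cite{lnx14c} enter, identifying the appropriate $U_{2,i}$ (typically a portion of the $X_2$-history) together with a Csisz\'ar-sum-identity-style single-letterization. The remaining steps --- Fano's inequality, the convex decomposition, and verifying that $(R_1^\circ, R_2^\circ)$ lies in $\mathcal{R}_{hk}$ --- are routine once that converse is in hand.
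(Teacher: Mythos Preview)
Your argument is correct but takes a genuinely different route from the paper. The paper gives a self-contained single-letterization of $\lambda R_1+R_2$ for all $\lambda\le 1$ at once: starting from $H(X_2^n|X_1^n)+\lambda I(X_1^n;Y_1^n)$, it applies the Csisz\'ar sum identity and the Markov structure $Y_{1i}-(X_{1i},X_{2i})-(X_2^{i-1},X_1^{n\smallsetminus i},Y_{1,i+1}^n)$ to reach $\sum_i\bigl[H(X_{2i})+\lambda I(X_{1i};Y_{1i})\bigr]$ minus two nonnegative correction terms, one carrying the coefficient $1-\lambda$ (this is where $\lambda\le 1$ enters). The bound is then matched by the treating-interference-as-noise point, i.e.\ $U_2=\phi$.

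You instead proceed modularly: take the $\lambda=1$ sum-rate converse from \cite{lnx14c} as a black box, the trivial bound $R_2\le H(X_2|Q)$ under the same time-sharing $Q$, and form the convex combination $\lambda(R_1+R_2)+(1-\lambda)R_2$. This is legitimate, and your verification that the corner $(R_1^\circ,R_2^\circ)$ lies in $\Rc_{hk}$ is correct. Two remarks, though. First, the auxiliary $U_2$ you carry through is superfluous: the very-weak-interference sum-rate converse you are citing yields the TIN bound directly (no nontrivial $U_{2,i}$ emerges), so your $(R_1^\circ,R_2^\circ)$ is really just $(I(X_1;Y_1|Q),H(X_2|Q))$ and the data-processing check is vacuous. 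Second, the heavy lifting you outsource to \cite{lnx14c} \emph{is} the $\lambda=1$ instance of the paper's own computation; the paper's proof has the advantage of being self-contained and of showing transparently where the sign condition on $1-\lambda$ is used, whereas yours hides that inside the cited reference and the convex split. Your approach buys modularity (reuse of a known sum-capacity result); the paper's buys a one-shot argument with no external dependence.
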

	\begin{proof}
	A standard converse/outer-bound argument proves that treating interference as noise is optimal.
	{\small\begin{align*}
	 n(\la R_1 + R_2)- n\epsilon &\overset{(a)}{\le}  H(X_2^n|X_1^n)+ \la I(X_1^n;Y_1^n) \\
	&= \sum_{i=1}^{n} H(X_{2i}|X_{1i}) - I(X_{2i};X_1^{n\smallsetminus i},X_2^{i-1}|X_{1i})+ \la I(X_1^n;Y_{1i}|Y_{1i+1}^n)\\
	&\le \sum_{i=1}^{n} H(X_{2i}|X_{1i}) - I(X_{2i};X_1^{n\smallsetminus i},X_2^{i-1}|X_{1i})+ \la I(X_1^n,Y_{1i+1}^n;Y_{1i})\\
	&= \sum_{i=1}^{n} H(X_{2i}|X_{1i}) - I(X_{2i};X_1^{n\smallsetminus i},X_2^{i-1}|X_{1i})\\
	&\qquad +\la\sum_{i=1}^{n} \left(I(X_1^n,Y_{1i+1}^n,X_2^{i-1};Y_{1i})- I(X_2^{i-1};Y_{1i}|X_1^n,Y_{1i+1}^n)\right)\\
	&\overset{(b)}{=} \sum_{i=1}^{n} H(X_{2i}|X_{1i}) - I(X_{2i};X_1^{n\smallsetminus i},X_2^{i-1}|X_{1i})\\
	&\qquad + \la\sum_{i=1}^{n} \left(I(X_1^n,Y_{1i+1}^n,X_2^{i-1};Y_{1i})- I(Y_{1i+1}^{n};X_{2i}|X_1^n,X_2^{i-1})\right)\\
	&= \sum_{i=1}^{n}  H(X_{2i}|X_{1i})- (1-\la)I(X_{2i};X_1^{n\smallsetminus i},X_2^{i-1}|X_{1i})\\
	&\qquad -\la\sum_{i=1}^{n} \Big(I(X_{2i};X_1^{n\smallsetminus i},X_2^{i-1},Y_{1i+1}^{n}|X_{1i})- I(X_1^n,Y_{1i+1}^n,X_2^{i-1};Y_{1i})\Big)\\
	&= \sum_{i=1}^{n} H(X_{2i}|X_{1i}) - (1-\la)I(X_{2i};X_1^{n\smallsetminus i},X_2^{i-1}|X_{1i})+ \la I(X_{1i};Y_{1i}) \\
	&\qquad -\la \sum_{i=1}^{n}\Big(I(X_{2i};X_1^{n\smallsetminus i},X_2^{i-1},Y_{1i+1}^n|X_{1i})- I(X_1^{n\smallsetminus i},X_2^{i-1},Y_{1i+1}^n;Y_{1i}|X_{1i})\Big)\\
	&\overset{(c)}{=} \sum_{i=1}^{n} H(X_{2i}|X_{1i})+ \la I(X_{1i};Y_{1i}) - (1-\la)I(X_{2i};X_1^{n\smallsetminus i},X_2^{i-1}|X_{1i})- \la I(X_{2i};X_1^{n\smallsetminus i},X_2^{i-1},Y_{1i+1}^n|Y_{1i},X_{1i})\\
	&\leq  n\left(\max ( H(X_2)+ \la I(X_1;Y_1)\right),
	\end{align*}}\\[-2em]
	
	\noindent where (a) follows from Fano's inequality, (b) Csiszar sum identity and (c) properties of the Markov chain formed by $Y_{1i}-\left(X_{1i},X_{2i}\right)-\left( X_2^{i-1},X_1^{n\smallsetminus i},Y_{1i+1}^n\right)$. 
	
	Since $\eps > 0$ is arbitrary, we see that any achievable rate pair must satisfy
	$$ \la R_1 + R_2 \leq \max_{p_1(x_1) p_2(x_2)}  H(X_2)+ \la I(X_1;Y_1),$$
	which is achievable by treating interference as noise, or more precisely, setting $U_2=\phi$ in the HK region. Hence, the proposition is established.
	\end{proof}

	On the contrary, we will see that, for some channels, 
	$$\underset{\mathcal{R}_{hk}}{\max}(\la R_1+R_2) < \underset{\mathcal C}{\max}(\la R_1+R_2)$$
	when $\la$ becomes larger than $1$. The following lemma helps us evaluate the quantity $\underset{\mathcal{R}_{hk}}{\max}(\la R_1+R_2)$.
	
	\begin{lemma}
		\label{lemma:2}
		For a CZI channel, for all $\la > 1$
		 \begin{align}
			&\underset{\mathcal{R}_{hk}}{\max}(\la R_1+R_2) = \underset{p_1(x_1)p_2(x_2)}{\max} \Big\{I(X_1,X_2;Y_1)  +\underset{p_2(x_2)}{\mathfrak{C}}\big[H(X_2)-I(X_2;Y_1|X_1) +(\lambda -1)I(X_1;Y_1)\big]\Big\}, \label{hkz}
		\end{align}
		 where $\underset{x}{\mathfrak{C}}[f(x)]$ of $f(x)$ denotes the upper concave envelope of $f(x)$ over $x$. \cite{nai13}
	\end{lemma}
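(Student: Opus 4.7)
I would start by treating the three constraints of Proposition~\ref{prop:hkz} as a parametrized LP for each fixed admissible distribution. Denote the right-hand sides by
$$A_1=I(X_1;Y_1|U_2,Q),\quad A_2=H(X_2|Q),\quad A_{12}=I(X_1,U_2;Y_1|Q)+H(X_2|U_2,Q).$$
Since $\lambda>1$, the objective $\lambda R_1+R_2$ is LP-maximized at $R_1=A_1$, $R_2=\min(A_2,A_{12}-A_1)$. The first substantive step is to verify that the sum-rate constraint is the active one at this corner, i.e., $A_{12}-A_1\leq A_2$, which after cancellation is exactly $I(U_2;Y_1|Q)\leq I(U_2;X_2|Q)$.

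This inequality is the step I expect to carry the real content. The factorization $p(q)p(u_2|q)p(x_2|u_2)p(x_1|q)p(y_1|x_1,x_2)$ gives both $U_2\perp X_1\mid Q$ and the Markov chain $U_2\to X_2\to Y_1$ conditional on $(X_1,Q)$. Chaining them yields
\begin{align*}
I(U_2;Y_1|Q)\leq I(U_2;Y_1|X_1,Q)\leq I(U_2;X_2|X_1,Q)=I(U_2;X_2|Q),
\end{align*}
where the final equality uses $X_2\perp X_1\mid Q$ and $X_2\perp Q\mid U_2$ (both immediate from the factorization). So for any fixed distribution the LP value is $(\lambda-1)A_1+A_{12}$. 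The time-sharing variable $Q$ can then be absorbed: since $(\lambda-1)A_1+A_{12}$ is a $p(q)$-average of a functional depending only on $p(x_1|q)$, $p(u_2|q)$, and the globally fixed $p(x_2|u_2)$, replacing $Q$ by its best value can only increase the objective, reducing the maximization to distributions of the form $p(x_1)p(u_2)p(x_2|u_2)$.

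For the final step I would fix $p_1(x_1)$ and the marginal $p_2(x_2)=\sum_{u_2}p(u_2)p(x_2|u_2)$ and recognize the remaining inner maximization as an upper concave envelope. The Markov identity $I(X_2;Y_1|X_1)=I(U_2;Y_1|X_1)+I(X_2;Y_1|X_1,U_2)$ (again from $U_2\to X_2\to Y_1$ given $X_1$) together with $U_2\perp X_1$ allows the rewriting
\begin{align*}
(\lambda-1)I(X_1;Y_1|U_2)+I(X_1,U_2;Y_1)+H(X_2|U_2)=I(X_1,X_2;Y_1)+\bigl[(\lambda-1)I(X_1;Y_1|U_2)+H(X_2|U_2)-I(X_2;Y_1|X_1,U_2)\bigr],
\end{align*}
in which the first summand depends only on $p_1,p_2$. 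The bracketed term is the $U_2$-average of $(\lambda-1)I(X_1;Y_1)+H(X_2)-I(X_2;Y_1|X_1)$ evaluated at the conditional law $p(x_2|u_2)$, so its supremum over all mixture decompositions of $p_2$ is by definition $\mathfrak{C}_{p_2(x_2)}[\,\cdot\,]$. Taking the outer maximization over $p_1,p_2$ then produces \eqref{hkz}, and the only genuinely nontrivial ingredient is the data-processing comparison $I(U_2;Y_1|Q)\leq I(U_2;X_2|Q)$ used in the second paragraph.
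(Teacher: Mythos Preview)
Your argument is correct and coincides with the paper's proof in all essentials: both reduce $\max_{\Rc_{hk}}(\lambda R_1+R_2)$ to $(\lambda-1)A_1+A_{12}$ for the best choice of $p(x_1)p(u_2,x_2)$, then rewrite that expression via the concave-envelope identity. The paper presents the two directions separately (the upper bound via $\lambda R_1+R_2=(\lambda-1)R_1+(R_1+R_2)\le(\lambda-1)A_1+A_{12}$, the lower bound by exhibiting the corner point as feasible), whereas you handle both at once through the LP analysis; your explicit verification of $I(U_2;Y_1|Q)\le I(U_2;X_2|Q)$ is precisely the feasibility check the paper invokes with the phrase ``as it satisfies the constraints.''
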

	
	\begin{proof}	
	For any $(R_1,R_2) \in \Rc_{hk}$, there must exist a distribution $p(q) p_2(u_2,x_2|q) p_1(x_1|q) $ such that
	\begin{align*}
	\la R_1 + R_2 & \leq (\la-1) I(X_1;Y_1|U_2,Q)+ I(X_1,U_2;Y_1|Q) + H(X_2|U_2,Q) \\
	& = I(X_1,X_2;Y_2|Q) + H(X_2|U_2,Q)- I(X_2;Y_2|U_2,X_1,Q) + (\la-1) I(X_1;Y_1|U_2,Q)\\
	& \overset{(d)}{=} I(X_1,X_2;Y_2|Q)+\underset{p_2(x_2|q)}{\mathfrak{C}} \big[H(X_2|Q)-I(X_2;Y_1|X_1,Q) + (\lambda -1)I(X_1;Y_1|Q)\big],
	\end{align*}\\[-2em]
	
	\noindent where $(d)$ follows directly from the definition of the upper concave envelope.
	Since $Q$ computes an average, and since the average is less than the maximum, we obtain that
	\begin{align*}
	&\underset{\mathcal{R}_{hk}}{\max}(\la R_1+R_2)\leq \underset{p_1(x_1)p_2(x_2)}{\max} \Big\{I(X_1,X_2;Y_1) +\underset{p_2(x_2)}{\mathfrak{C}}\big[H(X_2)-I(X_2;Y_1|X_1) +(\lambda -1)I(X_1;Y_1)\big]\Big\}.
	\end{align*}
On the other hand, for any $p_2(u_2,x_2)p_1(x_1)$, the following rate pair
	$$ (R_1,R_2) = \left(I(X_1;Y_1|U_2), \; H(X_2|U_2) + I(U_2;Y_1)\right)$$
	belongs to $\Rc_{hk}$ as it satisfies the constraints. Thus,
	\begin{align*}
	\underset{\mathcal{R}_{hk}}{\max}(\la R_1+R_2)&\ge \max_{p_2(u_2,x_2)p_1(x_1) } \Big\{\la  I(X_1;Y_1|U_2)+ H(X_2|U_2)+I(U_2;Y_1)\Big\} \\
	& = \max_{p_2(u_2,x_2)p_1(x_1) }  \Big\{I(X_1,U_2;Y_1) + H(X_2|U_2)+(\la - 1)  I(X_1;Y_1|U_2)\Big\}\\
	& = \max_{p_2(u_2,x_2)p_1(x_1) } \Big\{I(X_1,X_2;Y_1) + H(X_2|U_2) - I(X_2;Y_1|U_2,X_1) +  (\la - 1)  I(X_1;Y_1|U_2)\Big\}\\
	&\overset{(e)}{=}\max_{p_2(x_2)p_1(x_1)}I(X_1,X_2;Y_1) + \underset{p_2(x_2)}{\mathfrak{C}} \big[H(X_2)-I(X_2;Y_1|X_1) + (\lambda -1)I(X_1;Y_1)\big],
	\end{align*}\\[-2em]

	\noindent where (e) also follows directly from the definition of the upper concave envelope, see \cite{nai13}. This establishes the converse and completes the proof of the lemma. 
	\end{proof}

	By viewing the channel use across two consecutive time-slots as the channel use of a single time-slot of the corresponding product channel, we obtain what is usually termed the {\it two-letter} realization of the original channel. For the two letter product channel of a CZI channel, the transition probability satisfies $$\tilde{\qmf}(y_{11}y_{12}|x_{11},x_{12}x_{21},x_{22})=\qmf(y_{11}|x_{11}x_{21})\qmf(y_{12}|x_{12},x_{22}),$$ where $\qmf$ is the transition probability of the CZI channel. 
	
	\begin{proposition}
		The set of rate pairs satisfying
		\begin{align*}
		R_1 &= \frac 12 I(X_{11},X_{12};Y_{11},Y_{12}|Q),\\
		R_2 &= \frac 12 H(X_{21},H_{22}|Q),
		\end{align*}
		for some pmf $p(q)p(x_{11},x_{12}|q)p(x_{21}x_{22}|q)$ with $|Q|\leq 2$ is achievable by the original channel. 
	\end{proposition}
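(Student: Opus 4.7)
The plan is to apply Proposition~\ref{prop:hkz} to the two-letter product channel and then rescale the resulting rates by a factor of one-half. The first observation is that the product channel remains CZI: since $Y_2=X_2$ in every time slot, one has $(Y_{21},Y_{22})=(X_{21},X_{22})$, so receiver 2 of the product channel still sees a clean copy of the super-input $\tilde X_2 = (X_{21},X_{22})$. Proposition~\ref{prop:hkz} is therefore directly applicable to the product channel with super-letters $\tilde X_1 = (X_{11},X_{12})$ and $\tilde X_2 = (X_{21},X_{22})$.

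Next I would substitute $U_2 = \phi$ (the ``treat interference as noise'' corner) in the HK characterization for the product channel. The three constraints become
\begin{align*}
R_1' &< I(X_{11},X_{12};Y_{11},Y_{12}|Q),\\
R_2' &< H(X_{21},X_{22}|Q),\\
R_1' + R_2' &< I(X_{11},X_{12};Y_{11},Y_{12}|Q) + H(X_{21},X_{22}|Q),
\end{align*}
with input pmf of the form $p(q)p(x_{11},x_{12}|q)p(x_{21},x_{22}|q)$; the third inequality is the sum of the first two and hence non-binding. Consequently the corner point $\bigl(I(\tilde X_1;\tilde Y_1|Q),\, H(\tilde X_2|Q)\bigr)$ lies in $\mathcal{R}_{hk}$ of the product channel, with the cardinality bound $|Q|\le 2$ inherited directly from Proposition~\ref{prop:hkz} applied to the product channel.

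Finally, I would invoke the standard time-normalization argument: one use of the two-letter product channel corresponds to two uses of the original channel, so any rate pair achievable on the product channel gives rise to a rate pair at half those values achievable on the original. This yields exactly $R_1 = \tfrac12 I(\tilde X_1;\tilde Y_1|Q)$ and $R_2 = \tfrac12 H(\tilde X_2|Q)$, which is the claim. There is no serious obstacle here; the only conceptual point worth highlighting is that the CZI structure is preserved under channel products, which is immediate from $Y_2=X_2$ componentwise, so Proposition~\ref{prop:hkz} can be legitimately invoked on the two-letter channel. The true interest of this proposition lies not in the proof but in its use downstream: the right-hand side $\tfrac12 I(X_{11},X_{12};Y_{11},Y_{12}|Q)$ can exceed the best single-letter rate $I(X_1;Y_1|Q)$ achievable by treating interference as noise, and it is this gap that will eventually be exploited to show $\mathcal{R}_{hk}\subsetneq\mathcal{C}$.
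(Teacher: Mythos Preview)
Your proposal is correct and follows essentially the same approach as the paper: the paper's proof is a one-sentence observation that the given rate pair is the ``treating interference as noise'' point for the two-letter channel together with the $\tfrac12$ normalization, and you simply unpack this by invoking Proposition~\ref{prop:hkz} on the product CZI channel with $U_2=\phi$. The only difference is the level of detail; your explicit verification that the product channel remains CZI and that the sum constraint is redundant is a useful elaboration rather than a different route.
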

	\begin{proof}
		This rate pair is precisely the {\it treating interference as noise} rate pair of the two-letter channel, and the normalization by $\frac 12$  is due to the fact that we code over two time-slots of the original channel.
	\end{proof}
	We denote this (normalized) two-letter HK region as $\mathcal{R}_{two}$.
	
	\subsection{Sub-optimality of the HK region}
	In this part we provide several CZI channels for which, for some fixed $(\la>1)$,  $\underset{\mathcal{R}_{two}}{\max}(\la R_1 +R_2)$ becomes larger than $\underset{\mathcal{R}_{hk}}{\max}(\la R_1 +R_2)$, which proves the sub-optimality of the HK region. 
	
	Examples are of channels with binary input and output. A $2\times2$ matrix is used to represent the channel:
	\begin{align*}
		\qmf(y_1|x_1,x_2) = \begin{bmatrix}
			P(Y_1=0|X_1,X_2=0,0)& P(Y_1=0|X_1,X_2=0,1)\\ 
			P(Y_1=0|X_1,X_2=1,0)& P(Y_1=0|X_1,X_2=1,1)
		\end{bmatrix}.
	\end{align*}
The fact that $X_2$ is binary allows us to compute the upper concave envelope in Lemma \ref{lemma:2} with extremely high precision.
	
	The channels in Table \ref{tab:tab1} are  obtained using numerical methods. We prove, as a demonstration, in the Appendix that the difference in rates of the first channel listed above is not due to numerical errors and that the maximum single-letter rate is indeed strictly smaller than the maximum (normalized) two-letter rate achieved by the corresponding two-letter product channel.
	
	\begin{table}[!]
	\caption{Table of counter-examples} \label{tab:tab1}
	\begin{center}
		\begin{tabular}{ | c | c | c | c | }
			\hline
			\rule{0pt}{3ex}\rule[-3ex]{0pt}{0pt} $\la$ & channel & $\underset{\mathcal{R}_{hk}}{\max}(\la R_1+R_2)$ & $\underset{\mathcal{R}_{two}}{\max}(\la R_1 +R_2)$ \\ \hline
		    \rule{0pt}{4ex}\rule[-3ex]{0pt}{0pt} 2 & $ \begin{bmatrix}
			1&                 0.5\\ 
			1&                  0 \end{bmatrix}$ &  1.107516 & 1.108141 \\ \hline 
			\rule{0pt}{4ex}\rule[-3ex]{0pt}{0pt} 2.5 & $ \begin{bmatrix}
			0.204581& 0.364813\\ 
			0.030209& 0.992978\end{bmatrix}$ & 1.159383 & 1.169312 \\ \hline
			\rule{0pt}{4ex}\rule[-3ex]{0pt}{0pt} 3 & $ \begin{bmatrix}
			0.591419& 0.865901\\ 
			0.004021& 0.898113\end{bmatrix}$ & 1.241521 & 1.255814 \\ \hline
			\rule{0pt}{4ex}\rule[-3ex]{0pt}{0pt} 3 & $ \begin{bmatrix}
			0.356166& 0.073253\\ 
			0.985504& 0.031707\end{bmatrix}$ & 1.292172 & 1.311027 \\ \hline 
			\rule{0pt}{4ex}\rule[-3ex]{0pt}{0pt} 3 & $ \begin{bmatrix}
			0.287272& 0.459966\\ 
			0.113711& 0.995405\end{bmatrix}$ & 1.117253 & 1.123151 \\ \hline
			\rule{0pt}{4ex}\rule[-3ex]{0pt}{0pt} 4 & $ \begin{bmatrix}
			0.429804& 0.147712\\ 
			0.948192& 0.002848\end{bmatrix}$ &  1.181392 & 1.196189 \\ \hline
			\rule{0pt}{4ex}\rule[-3ex]{0pt}{0pt} 4 & $ \begin{bmatrix}
			0.068730& 0.443630\\ 
			0.011377& 0.954887\end{bmatrix}$ & 1.223409 & 1.243958 \\ \hline
			\rule{0pt}{4ex}\rule[-3ex]{0pt}{0pt} 5 & $ \begin{bmatrix}
			0.969199& 0.564440\\ 
			0.954079& 0.061409\end{bmatrix}$ & 1.351229 & 1.372191 \\ \hline
			\rule{0pt}{4ex}\rule[-3ex]{0pt}{0pt} 5 & $ \begin{bmatrix}
			0.943226& 0.447252\\ 
			0.950791& 0.024302\end{bmatrix}$ & 1.231254 & 1.250564 \\ \hline 
			\rule{0pt}{4ex}\rule[-3ex]{0pt}{0pt} 6 & $ \begin{bmatrix}
			0.943292& 0.045996\\ 
			0.589551& 0.202487\end{bmatrix}$ & 1.069405 & 1.076932 \\ \hline 
			\rule{0pt}{4ex}\rule[-3ex]{0pt}{0pt} 6 & $ \begin{bmatrix}
			0.714431& 0.019375\\ 
			0.955918& 0.448539\end{bmatrix}$ & 1.528508 & 1.541781 \\ \hline
			\rule{0pt}{4ex}\rule[-3ex]{0pt}{0pt} 7 & $ \begin{bmatrix}
			0.058449& 0.558649\\ 
			0.194915& 0.959172\end{bmatrix}$ & 1.424974 & 1.452769 \\ \hline
			\rule{0pt}{4ex}\rule[-3ex]{0pt}{0pt} 7 & $ \begin{bmatrix}
			0.033312& 0.876067\\ 
			0.286125& 0.992825\end{bmatrix}$ &  1.179438 & 1.187867 \\ \hline
			\rule{0pt}{4ex}\rule[-3ex]{0pt}{0pt} 10 & $ \begin{bmatrix}
			0.307723& 0.874843\\ 
			0.032090& 0.710535\end{bmatrix}$ &   1.370830 & 1.388674 \\ \hline	
			\rule{0pt}{4ex}\rule[-3ex]{0pt}{0pt} 15 & $ \begin{bmatrix}
			0.946802& 0.311909\\ 
			0.730770& 0.155075\end{bmatrix}$ &  1.391596 & 1.406325 \\   \hline \rule{0pt}{4ex}\rule[-3ex]{0pt}{0pt} 100 & $ \begin{bmatrix}
			0.382410& 0.081474\\ 
			0.584797& 0.241840\end{bmatrix}$ & 3.754016 & 3.789316 \\ \hline
			\rule{0pt}{4ex}\rule[-3ex]{0pt}{0pt} 100 & $ \begin{bmatrix}
			0.673979& 0.194596\\ 
			0.781192& 0.285216\end{bmatrix}$ &   1.711938 & 1.730715 \\ \hline
		\end{tabular}
	\end{center}
	\end{table}

	\subsection{Intuition and a natural modification}\label{se:intuition}
	In this section, we present an intuition as well as a coding strategy motivated by this intuition that indicates how one may improve on the Han--Kobayashi encoding scheme.
	
	The counterexamples we generated in the last section had the following feature: even though $\lambda$ was strictly larger than one, the optimal $U_2$ that yielded $\underset{\mathcal{R}_{hk}}{\max}(\la R_1+R_2)$ was still the trivial random variable; implying that there were distributions $p_1(x_1)$ and $p_2(x_2)$ such that 
	$$ R_1 = I(X_1;Y_1), \quad R_2 = I(X_2;Y_2) = H(X_2)$$
	yielded the maximum weighted sum-rate.
	
	Suppose we now go to the two-letter product channel and take the product distribution of the marginals that yielded the one letter maximum as the transmitter distribution, clearly we would get the same rate. It is an easy exercise to verify that $I(X_1;Y_1)$ is convex in $X_2$ (utilizing the fact that $X_1$ and $X_2$ are independent). Thus a perturbation of the product distribution into two distributions that preserve the average would reduce $R_2 = \frac 12 H(X_{21},X_{22})$ but increase $R_1 = \frac 12 I(X_{11},X_{12};Y_{11},Y_{12}). $ Since we are interested in $\la R_1 + R_2$ with $\la > 1$, it is conceivable that such a perturbation would increase the weighted sum-rate. 
	
	Note that $X_2$ acts like a state variable on the communication of the channel between $X_1$ and $Y_1$. If the channel from $X_1 \to Y_1$, with $X_2$ as the state, is not memoryless, we know that the optimal code distributions on $X_1^n$ are not independent distributions. 
	
	For instance, if one creates $X_2^n$ according to a first order Markov process, the channel from $X_1^n$ to $Y_1^n$ becomes a channel whose state varies like a first order Markov process. For such a coding strategy, one could achieve
	$R_2 = \bar{H}(X_2)$, $R_1 = \bar{C}(X_1;Y_1)$, where $\bar{H}(X_2)$ denotes the entropy rate of the Markov process $X_2^n$ and $\bar{C}(X_1;Y_1)$ denotes the capacity of the channel whose state varies according to $X_2^n$.
	
	Note that in general $\bar{C}(X_1;Y_1)$ does not have a closed form and is quite hard to compute; but this scheme, as opposed to block coding, appears to the authors to be a natural fit for interference channels. It would also explain why i.i.d. coding (in the sense of Han--Kobayashi) might not be optimal for a CZI channel.

\section{Conclusion}
We have shown in the paper that Han--Kobayashi achievable region is strictly sub-optimal, which makes finding new ways of modeling achievable regions for interference channels almost a necessity in the future. 

	\section{Appendix}
	\subsection*{Analysis of a particular example}
	Consider the CZI channel where Figure \ref{fig:c1} the depicts $\qmf(y_1|x_1,x_2)$ as two point to point channels $X_1 \to Y_1$ for different choices of $X_2$. Our purpose is to show the details of computation $\Rc_{hk}$  when $\la = 2$.
	\begin{figure}[h]\centering
						\begin{tikzpicture}
						\node at (0,-0.65) {$X_2=0$};
						\node at (5,-0.65) {$X_2=1$};
						\node at (-1.7,0.75) {$X_1$};
						\node at (-1.3,1.5) {$0$};
						\node at (-1.3,0) {$1$};
						\filldraw
						(-1,1.5) circle (2pt)
						(-1,0) circle (2pt)
						(1.1,1.5) circle (2pt)
						(1.1,0) circle (2pt)
						(4,1.5) circle (2pt)
						(4,0) circle (2pt)
						(6.1,1.5) circle (2pt)
						(6.1,0) circle (2pt);
						\node at (1.7,0.75) {$Y_1$};
						\node at (1.4,1.5) {$0$};
						\node at (1.4,0) {$1$};
						\node at (3.3,0.75) {$X_1$};
						\node at (3.7,1.5) {$0$};
						\node at (3.7,0) {$1$};
						\node at (6.7,0.75) {$Y_1$};
						\node at (6.4,1.5) {$0$};
						\node at (6.4,0) {$1$};
						\draw [->,thick] (-1,1.5) -- (.9,1.5);
						\draw [->,thick] (-1,0) -- (1,1.4);
						\draw [->,thick] (4,1.5) -- (5.9,1.5) node[pos=0.5][align=center,  above]{$\frac 12$};
						\draw [->,thick] (4,0) -- (5.9,0);
						\draw [->,thick] (4,1.5) -- (6,.1) node[pos=0.4][align=right,  below]{$\frac 12$};
						\draw[dashed] (2.5,2) -- (2.5,-1);
						\end{tikzpicture}
						\caption{Binary CZI channel} %
						\label{fig:c1}
					\end{figure}
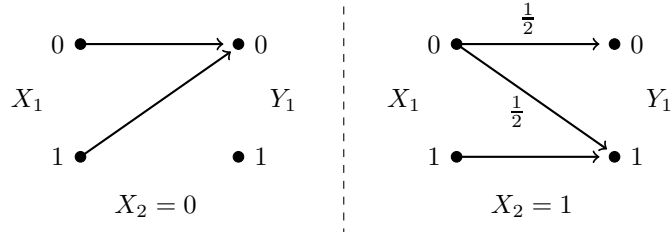

		 Let $P(X_1=0)=p$ and $P(X_2=0)=q$. By Lemma \ref{2}
		\begin{align}\label{hkla2}
		\underset{\mathcal{R}_{hk}}{\max}(2R_1+R_2) = \underset{p_1(x_1)p_2(x_2)}{\max} \Big\{I(X_1,X_2;Y_1)+\underset{p_2(x_2)}{\mathfrak{C}} \big[H(X_2)-I(X_2;Y_1|X_1) + I(X_1;Y_1)\big]\Big\}.
		\end{align}
		Define
	\begin{align}
	f(p,q) \coloneqq& H(X_2)-I(X_2;Y_1|X_1) + I(X_1;Y_1) \nonumber\\
	=& h_b(q)-2ph_b(\frac{q+1}{2})-2\bar{p}h_b(q) +h_b(q+\frac {p}{2}\bar{q})+p\bar{q} \label{f(p,q)}.
	\end{align}
	Here $h_b(x) = - x \log_2(x) - (1-x) \log_2(1-x)$ denotes the binary entropy function.
	Thus, we obtain that
	\begin{align}\label{hkla2a}
		\underset{\mathcal{R}_{hk}}{\max}(2R_1+R_2)\underset{p,q}{\max} ~ \Big \{h_b(q+\frac {p}{2}\bar{q}) - p\bar{q} + \underset{q}{\mathfrak{C}}  \big[h_b(q+\frac {p}{2}\bar{q})- p\bar{q}  \big]\Big\}.
		\end{align}
		Clearly the main computational imprecision may\footnote{In general since the concave envelope is computed over a single variable and the function is rather well behaved (at most two inflection points) when $X_2$ is binary, numerical computations using Matlab have yielded very high precision results even for the other counter examples listed.} arise from the estimation of the concave envelope; however as the next result shows; for this channel we obtain an explicit characterization of the concave envelope.

	\begin{lemma}
	\label{le:cenv}
		Consider the bivariate function $f(p,q)$ as defined in (\ref{f(p,q)}) where $(p,q) \in [0,1]\times [0,1]$.  Then
		\begin{itemize}
			\item[(i)] if $p > \frac 12$, $$\underset{q}{\mathfrak{C}}[f(p,q)]= f(p,q).$$
			\item[(ii)] if $p \le \frac 12$,
			{\small$$\hspace*{-0.3in}\underset{q}{\mathfrak{C}} [f(p,q)] = \begin{cases} 
			\begin{array}{cc}
			\frac{f(p,1-2p)-f(p,0)}{1-2p}q+f(p,0) & q \in [0,1-2p]\\\\
			f(p,q) &  1-2p < q 
			\end{array}
			\end{cases}.$$}
			
		\end{itemize}
	\end{lemma}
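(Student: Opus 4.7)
The plan is to study $g(q) := f(p,q)$ for each fixed $p$ through its second derivative, identify regions of convexity and concavity, and then assemble the concave envelope via standard tangent-line reasoning.

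First, I would compute $g''(q)$ using $h_b''(x) = -1/(x(1-x)\ln 2)$ and the substitution $u := q + p(1-q)/2$, for which $1-u = (1-q)(1-p/2)$. This gives
\[
g''(q)\,\ln 2 \;=\; \frac{1}{1-q}\Bigl[-\frac{2p-1}{q} + \frac{2p}{q+1} - \frac{1-p/2}{u}\Bigr].
\]
Clearing the positive denominator $qu(q+1)$, the sign of $g''(q)$ collapses to the sign of $p\bigl[(p-3/2)q + (1-2p)/2\bigr]$, giving the clean threshold: $g''(q) \le 0$ iff $q \ge q_{\rm inf} := (1-2p)/(3-2p)$. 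Since $3-2p>1$, we always have $q_{\rm inf} < 1-2p$. Case (i) is then immediate: for $p>1/2$ we have $q_{\rm inf}<0$, so $g$ is concave on $[0,1]$ and its envelope equals $g$ itself.

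For Case (ii), $g$ is convex on $[0,q_{\rm inf}]$ and concave on $[q_{\rm inf},1]$, with $q^* := 1-2p$ lying inside the concave region. The computational heart of the argument is the tangent identity
\[
g'(q^*) \;=\; \frac{g(q^*) - g(0)}{q^*}.
\]
I would verify this by direct expansion using $h_b'(x)=\log_2((1-x)/x)$ together with the clean evaluations $(q^*+1)/2 = 1-p$ and $q^* + p(1-q^*)/2 = (1-p)^2$. Both sides are shown to collapse to
\[
-(1-p) + \tfrac{p}{2}\log_2 p - 2(1-p)\log_2(1-p) + (1-2p)\log_2(1-2p) + (1-\tfrac{p}{2})\log_2(2-p),
\]
which requires the symmetry $h_b(x)=h_b(1-x)$ on $h_b(1-2p)$ and $h_b(1-p)$ and the expansion $h_b((1-p)^2) = -2(1-p)^2\log_2(1-p) - p(2-p)[\log_2 p + \log_2(2-p)]$, followed by extensive cancellation. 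This algebraic check is the step I expect to be the main obstacle.

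With the tangent identity in hand, let $\ell(q)$ be the secant through $(0,g(0))$ and $(q^*,g(q^*))$; it is simultaneously the tangent to $g$ at $q^*$. Concavity of $g$ on $[q_{\rm inf},1]$ forces $\ell \ge g$ there, and in particular $\ell(q_{\rm inf}) \ge g(q_{\rm inf})$. Combined with $\ell(0)=g(0)$ and convexity of $g$ on $[0,q_{\rm inf}]$, a standard two-endpoint comparison (if $\ell$ is linear with $\ell \ge g$ at both endpoints of an interval on which $g$ is convex, then $\ell \ge g$ on the whole interval) yields $\ell \ge g$ on $[0,q_{\rm inf}]$ as well. Hence the piecewise function $\tilde g$ defined in the lemma (namely $\ell$ on $[0,q^*]$ and $g$ on $[q^*,1]$) is a concave majorant of $g$ that agrees with $g$ at $q=0$ and throughout $[q^*,1]$. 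Minimality follows because any concave majorant $h \ge g$ satisfies $h(0)\ge g(0) = \tilde g(0)$ and $h(q^*) \ge g(q^*) = \tilde g(q^*)$, and concavity of $h$ then forces $h \ge \tilde g$ on $[0,q^*]$ via the chord of $h$ between these two points. This establishes $\tilde g = \mathfrak{C}_q[f(p,q)]$ in Case (ii).
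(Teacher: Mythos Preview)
Your proposal is correct and follows essentially the same route as the paper: compute $\partial^2 f/\partial q^2$, read off the convex--then--concave structure with inflection at $q_{\rm inf}=(1-2p)/(3-2p)$, and verify that $\hat q=1-2p$ solves the tangent/secant equation $g'(\hat q)=(g(\hat q)-g(0))/\hat q$. Your write-up is in fact more thorough than the paper's---you sketch the algebraic verification of the tangent identity and give explicit majorant and minimality arguments for why the resulting piecewise function is the concave envelope, whereas the paper simply asserts these steps.
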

	\begin{proof}
	The second derivative  with respect to $q$ is
	\begin{align}
	\frac{\partial^2 f(p,q)}{\partial q^2} = \frac{p}{q \bar q \ln 2}   \frac{(1-3q - 2p\bar q)}{(1+q)(2q + p\bar q)} \label{secder}
	\end{align}
	If $p \in (\frac 12,1)$, then (\ref{secder}) is negative for $q \in (0,1)$, i.e., if $p > \frac 12$, $f(p,q)$ is concave in $q$ and $\underset{q}{\mathfrak{C}}[f(p,q)]= f(p,q)$.\\
	If $p \in (0,\frac 12)$, then (\ref{secder}) has one solution, $q^* \in (0,1)$. 
	$$q^* =  \frac{1-2p}{3-2p}.$$
	In fact, $f(p,q)$ is convex for $q \in (0,q*)$ and concave for $q \in (q*,1).$ Thus $\underset{q}{\mathfrak{C}} [f(p,q)]$ consists of two parts. First part is a tangent line from the point $f(p,0)$ to the function $f(p,\hat{q})$ and the second part is equal to $f(p,q)$.

	To find the point where the tangent line meets the function, ($\hat q$), we need to solve the following equation
	$$ \frac{f(p,\hat q)-f(p,0)}{\hat q} = \frac{\partial f(p,q)}{\partial q}\Big |_{\hat q}.$$
	Because the function is initially convex and then concave, the above equation will have at most one solution $\hat{q} \neq 0$. 
	One can verify that $\hat q = 1-2p$ is the required solution, and this completes the proof. \end{proof}
	
	Define $F(p,q)$ for $(p,q) \in [0,1] \times [0,1]$ as
	\begin{equation}
	\begin{cases} \begin{array}{lc} h_b(q+\frac {p}{2}\bar{q}) - p\bar{q} + f(p,q) & q \geq \min\{0,1-2p\}\\
	 h_b(q+\frac {p}{2}\bar{q}) - p\bar{q} & \\
	 \quad + \frac{f(p,1-2p)-f(p,0)}{1-2p}q+f(p,0) & o.w.,
	 \end{array}
	 \end{cases}
	\label{eq:actexp}
	\end{equation}
	where $f(p,q)$ is defined in \eqref{f(p,q)}.
	
	From Lemma \ref{le:cenv} and  \eqref{hkla2a}, we know that 
	\begin{align}\label{hkla3}
		\underset{\mathcal{R}_{hk}}{\max}(2R_1+R_2) = \underset{p,q}{\max} ~F(p,q).
		\end{align}
		A tedious exercise shows that the concave envelope of $F(p,q)$ w.r.t. $(p,q)$ matches the function value $F(p_0,q_0)$ at\footnote{Naturally, we choose a point that is numerically very close to the true maximum. } $(p_0,q_0) = (0.507829413, 0.436538150)$.  Hence an upper bound on $\underset{\mathcal{R}_{hk}}{\max}(2R_1+R_2)$ is given by maximum value of the supporting hyperplane to $F(p,q)$ at $p_0,q_0$, which is in turn upper bounded by $F(p_0,q_0) + |a| + |b|$ where $a = \frac{\partial F}{\partial p}\Big|_{p_0}$, and $b=  \frac{\partial F}{\partial q}\Big|_{q_0}. $ Evaluating the values we obtain an upper bound given by
		\begin{equation} \underset{\mathcal{R}_{hk}}{\max}(2R_1+R_2) \leq   1.107577. \label{eq:bd1let} \end{equation}
		On the other hand consider the following point in $\Rc_{two}$ given by
		$$ R_1 = \frac 12I(X_{11},X_{12};Y_{11},Y_{12}),
			\quad R_2 = \frac 12H(X_{21},H_{22}|Q),    $$
			where $P((X_{11},X_{12}) = (0,0)) = p_0, P((X_{11},X_{12}) = (1,1)) = 1-p_0$, and 
			$ P((X_{21},X_{22})=(0,0) ) = 0.36 q_0, P((X_{21},X_{22})=(0,1) ) = P((X_{21},X_{22})=(1,0) )  = 0.64 q_0, P((X_{21},X_{22})=(1,1) )  = 1-1.64 q_0$. For this choice of distribution we get $2 R_1 + R_2 = 1.1080356$, which is strictly larger than the bound given in \eqref{eq:bd1let}. This establishes the sub-optimality of the Han--Kobayashi region for the particular example considered in the Appendix.
			
			As mentioned in Section \ref{se:intuition} the distribution of $(X_{21},X_{22})$ that outperforms the one-letter region is not the product distribution; but more surprisingly one is doing {\it repetition coding} on $X_{11}, X_{12}$.
				 
\bibliographystyle{amsplain}
\bibliography{mybiblio}

\end{document}